\newtheorem{theorem}{Theorem}
\newtheorem{claim}[theorem]{Claim}
\newtheorem{example_hidden}[theorem]{Example}
\newcommand\blackfbox[1]{\textcolor{black}{\setlength\fboxsep{5pt}\fbox{\textcolor{black}{#1}}}}
\newcommand{\cE}{{\cal E}}
\newcommand{\nint}[1]{\lfloor #1\rceil} %
\newcommand{\hd}{\hat{d}}
\newcommand{\EX}{\hbox{\bf E}}
\newcommand{\Sec}[1]{\hyperref[sec:#1]{\S\ref*{sec:#1}}} %
\newcommand{\Eqn}[1]{\hyperref[eq:#1]{(\ref*{eq:#1})}} %
\newcommand{\Fig}[1]{\hyperref[fig:#1]{Fig.\,\ref*{fig:#1}}} %
\newcommand{\Tab}[1]{\hyperref[tab:#1]{Tab.\,\ref*{tab:#1}}} %
\newcommand{\Thm}[1]{\hyperref[thm:#1]{Thm.\,\ref*{thm:#1}}} %
\newcommand{\Lem}[1]{\hyperref[lem:#1]{Lem.\,\ref*{lem:#1}}} %
\newcommand{\Prop}[1]{\hyperref[prop:#1]{Prop.~\ref*{prop:#1}}} %
\newcommand{\Cor}[1]{\hyperref[cor:#1]{Cor.~\ref*{cor:#1}}} %
\newcommand{\Def}[1]{\hyperref[def:#1]{Defn.~\ref*{def:#1}}} %
\newcommand{\Alg}[1]{\hyperref[alg:#1]{Alg.~\ref*{alg:#1}}} %
\newcommand{\Ex}[1]{\hyperref[ex:#1]{Ex.~\ref*{ex:#1}}} %
\newcommand{\Clm}[1]{\hyperref[clm:#1]{Claim~\ref*{clm:#1}}} %
\newcommand{\GraphFig}[3]{
  \begin{figure*}[t]
    \centering
    \subfloat{\label{fig:#1-dd}
      \includegraphics[width=.3\textwidth,trim=0 0 0 0]{#1-dd}
    }
    \subfloat{\label{fig:#1-cc}
      \includegraphics[width=.3\textwidth,trim=0 0 0 0]{#1-cc}
    }
    \subfloat{\label{fig:#1-eigs}
      \includegraphics[width=.3\textwidth,trim=0 0 0 0]{#1-eigs}
    }
    \caption{Properties of #1, #2, compared with the BTER and CL
      models. #3}
    \label{fig:#1}
  \end{figure*}
}
\newcommand{\SG}{S} %
\begin{document}

\title{Community structure and scale-free collections of Erd\H{o}s--R\'enyi graphs}

\author{C. Seshadhri, Tamara G. Kolda, Ali Pinar}
\affiliation{Sandia National Laboratories, Livermore, CA, USA}

\date{\today}

\begin{abstract}
Community structure plays a significant role in the analysis of social
networks and similar graphs, yet this structure is little understood
and not well captured by most models. 
We formally define a community to be a subgraph that is
internally highly connected and has no deeper substructure.
We use tools of combinatorics to show that any such community must
contain a dense Erd\H{o}s--R\'enyi (ER) subgraph.
Based on mathematical arguments, we hypothesize that any graph 
with a heavy-tailed degree distribution and community structure 
must contain a scale free collection of dense ER subgraphs. 
These theoretical observations corroborate well with
empirical evidence.
From this, we propose the Block Two-Level Erd\H{o}s--R\'enyi (BTER) model,
and demonstrate that it accurately captures the observable
properties of many real-world social networks.
\end{abstract}

\maketitle

\section{Introduction}
\label{sec:intro}

Graph analysis is becoming increasingly prevalent in the quest to
understand diverse phenomena like social relationships, scientific collaboration, purchasing
behavior, computer network traffic, and more. We refer
to graphs coming from such scenarios collectively as \emph{interaction networks}.
A significant amount of investigation has been done to understand the 
graph-theoretic properties common to interaction networks. 
Of particular importance is the notion of community structure. 
Interaction networks typically
decompose into internally well-connected sets referred
to as low conductance or high modularity cuts \cite{GiNe02, New06}.
Moreover, many graphs have high clustering coefficients \cite{WaSt98},
which is indicative of underlying community structure.  
Communities occur in a variety of sizes, though the largest community is 
often much smaller than the graph itself \cite{LeLaDa08, LaKiSa10}.
Community  analysis can reveal important patterns, decomposing large
collections of interactions into more meaningful components.

\subsection{A Theory of Communities} \label{sec:comm}

One metric of the quality of a community is the modularity metric
\cite{New06}. 
There are other measures such as conductance \cite{Ch92-book}, but they are equivalent to modularity in terms of our
intentions.
Consider a graph $G$ (undirected) with $n$ vertices and
degrees $d_1, d_2, \ldots, d_n$. Let $m = \frac{1}{2}\sum_{i=1}^n d_i$ denote
the number of edges. We say a subgraph $\SG$ has high modularity if $\SG$
contains many more internal edges than predicted by a \emph{null} model,
which says vertices $i$ and $j$ are connected with probability
$d_id_j/2m$. (Technically, the probability is $\min(1,
  d_id_j/2m)$, but we keep the notation simple for clarity.)
We refer to the null model as the CL model, based on its formalization
by Chung and Lu \cite{ChLu02, ChLu02-2}; see also Aiello et al.\@
\cite{AiChLu01} and the edge-configuration model of Newman et al.\@
\cite{NeWaSt02}.

Given a high modularity subgraph $\SG$, we say it is a \emph{module} if it
does not contain any further substructures of interest; in other
words, it is internally well-modeled by CL.
Formally, assume $\SG$ has $r$ nodes with \emph{internal}
degrees $\hd_1, \hd_2, \ldots, \hd_{r}$ and let the number of
edges in $\SG$ be denoted by $s = \frac{1}{2}\sum_{i=1}^r \hd_i$.
Consider the CL model on $S$, where edge $(i,j)$ occurs with probability
$\hd_i\hd_j/2s$. We call $S$ a \emph{module} if the induced subgraph on $S$ 
(the subgraph internal to $S$) is modeled well by this CL model.
Looking at the contrapositive,
if $\SG$ is not a module, then $\SG$ itself contains a subset of vertices that should
be separated out. A module can be thought of as an ``atomic" substructure within a graph.
In this language, we can think of community detection algorithms
as breaking a graph into modules. This discussion is not complete, however,
since communities are not just modules, but also internally well-connected.

Interaction networks have an abundance of triangles,
a fact that Watts and Strogatz \cite{WaSt98} succinctly express
through \emph{clustering coefficients}. Barrat and Weigt \cite{BaWe00} defined this as
\begin{equation}
  \label{eq:c}
  C = \frac{3 \times \text{total number of triangles}}
  {\text{total number of wedges}},
\end{equation}
where a \emph{wedge} is a path of length 2 \cite{WaSt98,GiNe02}.
It has been observed that $C$ ``has typical values in the range of 0.1
to 0.5 in many real-world networks'' \cite{GiNe02}. Moreover, our own studies have revealed 
that the node-level clustering coefficient (first used in \cite{WaSt98}), $C_i$, defined by 
\begin{equation}
  \label{eq:lc}
  C_i = \frac{\text{number of triangles incident to node } i}
  {\text{number of wedges centered at node } i},
\end{equation}
is typically highest for small degree nodes. 
Large clustering
coefficients are considered a manifestation of the community structures. 
Naturally, we expect the triangles to be largely contained within the
communities due to their high internal connectivity.

We now formally define a \emph{community} to be a module with a large
internal clustering coefficient. 
More formally, we say a module is a community if the expected
number of triangles is more than $(\kappa/3)\sum_i {\hd_i \choose 2}$,
for some constant $\kappa$.
In other words, a community is tightly connected internally and has few
external links.  A graph has \emph{community structure} if it (or at
least a constant fraction of it) can be broken up into communities.
The benefit of this formalism is that we can now try to understand
what graphs with community structure look like. 

Let us first begin by just focusing on a single community.
It seems fairly intuitive that a community cannot be large while
comprising only low
degree vertices nor that it consists of a single high-degree node
connected to degree-one vertices (a star). We
can actually prove a structural theorem about a community, given our formalization.
Recall that an Erd\H{o}s--R\'enyi (ER) graph
\cite{ErRe59,ErRe60} on $n$ vertices with connection probability $p$
is a graph such that each 
pair of vertices is independently connected with probability $p$. If $p$ is
a constant, we call this a dense ER graph; if $p = O(1/n)$, then
we call this a sparse ER graph. Using
triangle bounds from extremal combinatorics and some probabilistic arguments,
we can prove the following theorem.

\begin{theorem} \label{thm:community} A constant fraction of the edges in a community 
are contained in a dense Erd\H{o}s--R\'enyi graph. More formally, if the community
has $s$ edges, then there must be $\Omega(\sqrt{s})$ vertices with degree $\Omega(\sqrt{s})$.
\end{theorem}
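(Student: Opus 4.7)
The plan is to combine the two defining features of a community---the module (Chung--Lu) condition and the clustering condition $T \ge (\kappa/3)W$---to force $\sum_i \hd_i^2 = \Omega(s^{3/2})$, and then extract $\Omega(\sqrt{s})$ high-degree vertices by a short counting argument that uses the bound $\hd_i \le \sqrt{2s}$ which the CL null itself imposes.

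First I would write out the expected triangle count under the Chung--Lu null on the community. With $p_{ij}=\hd_i\hd_j/(2s)$, the moment telescopes as
\[
E[T_{\rm CL}]=\sum_{i<j<k} p_{ij}p_{jk}p_{ik}=\frac{1}{(2s)^3}\sum_{i<j<k}\hd_i^2\hd_j^2\hd_k^2\;\sim\;\frac{(\sum_i\hd_i^2)^3}{6(2s)^3},
\]
the approximation absorbing the lower-order ``diagonal'' symmetric-function corrections ($\sum \hd_i^4$, $\sum \hd_i^6$) which are negligible in the heavy-tailed regime. The module hypothesis then says $T\approx E[T_{\rm CL}]$, while the clustering hypothesis says $T\ge (\kappa/3)\sum_i\binom{\hd_i}{2}=\Theta(\sum_i\hd_i^2)$. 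Chaining these and cancelling one factor of $\sum_i\hd_i^2$ gives $(\sum_i\hd_i^2)^2=\Omega(\kappa s^3)$, that is, $\sum_i\hd_i^2=\Omega(s^{3/2})$.

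Next I would use the probability constraint $\hd_i\hd_j/(2s)\le 1$, which is what makes the CL null well defined and forces the uniform cap $\hd_i^2\le 2s$ for every $i$. Fix a small constant $\alpha>0$ and let $H=\{i:\hd_i\ge \alpha\sqrt s\}$. The tail outside $H$ contributes at most
\[
\sum_{i\notin H}\hd_i^2 \;\le\; \alpha\sqrt s\,\sum_i\hd_i \;=\; 2\alpha\, s^{3/2},
\]
so for $\alpha$ small enough (relative to the constant from the previous paragraph) we have $\sum_{i\in H}\hd_i^2=\Omega(s^{3/2})$. Dividing by the per-vertex cap $2s$ yields $|H|=\Omega(\sqrt s)$, which is the formal statement of the theorem. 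For the informal ``dense ER'' conclusion, the CL probabilities on $H$ are uniformly $\ge \alpha^2/2=\Omega(1)$, and applying the module condition once more identifies the induced subgraph on $H$ with a dense Erd\H{o}s--R\'enyi graph containing $\Omega(|H|^2)=\Omega(s)$ of the community's edges.

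The main obstacle I anticipate lies in the first paragraph: upgrading the qualitative ``$T\approx E[T_{\rm CL}]$'' supplied by the module hypothesis into a quantitative two-sided bound usable in the downstream calculation. This is essentially a concentration argument for triangle counts in Chung--Lu---standard for heavy-tailed degree sequences but requiring care near the probability cap, where a few vertex pairs with $\hd_i\hd_j$ approaching $2s$ force the $\min(1,\cdot)$ truncation and the naive moment formula degrades. A secondary technical point is verifying that the $(\sum\hd_i^2)^3$ principal term dominates the lower-order symmetric-polynomial corrections on the degree distributions that actually arise, which is where the ``heavy-tailed'' assumption from the discussion preceding the theorem does the real work.
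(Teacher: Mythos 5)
Your first step is sound (and in fact matches the paper's sketch): the truncated probabilities satisfy $\EX[X_{ijk}]\le p_{ij}p_{jk}p_{ik}$, so $T\le(\sum_i\hd_i^2)^3/(8s^3)$ holds regardless of the lower-order corrections you worry about, and combining with $T=\Omega(\sum_i\hd_i^2)$ gives $\sum_i\hd_i^2=\Omega(s^{3/2})$. (Your stated concern about upgrading ``$T\approx\EX[T_{\rm CL}]$'' via concentration is a non-issue: in the paper's formalization $T$ \emph{is} the expectation under the community's internal CL model, and only the upper-bound direction of the moment formula is needed here.) The genuine gap is in your second step. You convert $\sum_{i\in H}\hd_i^2=\Omega(s^{3/2})$ into $|H|=\Omega(\sqrt{s})$ by dividing by a per-vertex cap $\hd_i^2\le 2s$, which you justify by the constraint $\hd_i\hd_j/(2s)\le 1$. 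But the CL null used here is the truncated one, $\min(1,\hd_i\hd_j/2s)$, so no such cap on individual degrees exists; even under the untruncated constraint only the \emph{second}-largest degree is forced below $\sqrt{2s}$, and a single hub may have degree as large as $2s/\hd_{(2)}$. Concretely, a degree sequence with one vertex of degree $s^{3/4}$ and everything else of degree $O(s^{1/4})$ satisfies your step-1 conclusion with $H$ a single vertex, so the counting argument as written does not deliver $\Omega(\sqrt{s})$ vertices. Such sequences must of course violate the community hypotheses, but your argument contains no mechanism that rules them out.

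The paper closes exactly this hole with two ingredients you do not use. First, the Kruskal--Katona theorem applied to the random graph (together with a Chernoff bound giving $\EX[E^{3/2}]\le 2s^{3/2}$) yields the matching \emph{upper} bound $\sum_i\hd_i^2\le c's^{3/2}$. Second, and more importantly, the triangle expectation is re-derived with the $\min(1,\cdot)$ truncation kept for pairs of very-high-degree vertices: introducing $\ell$ as the first index with $\hd_\ell>b\sqrt{s}$, one gets $\beta\le(\sum_i\hd_i^2)(\sum_{i\le\ell}\hd_i^2)/(8s^3)+(\sum_{i\ge\ell}\hd_i)^2/(4s^2)$, and the Kruskal--Katona bound forces $\sum_{i\ge\ell}\hd_i\le(c'/b)s$, i.e.\ the super-$\sqrt{s}$-degree vertices carry only an $O(1/b)$ fraction of the edges. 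This localizes $\Omega(s^{3/2})$ of the $\ell_2$ mass on vertices of degree between $\gamma\sqrt{s}$ and $b\sqrt{s}$, where your division-by-the-cap step becomes legitimate and yields $\Theta(\sqrt{s})$ vertices. To repair your proposal you would need either to import Kruskal--Katona (or some substitute upper bound on $\sum_i\hd_i^2$) together with an argument that the mass is not concentrated on $o(\sqrt{s})$ hubs, or to redo the triangle upper bound with the truncation so that high-degree vertices are handled separately --- which is essentially the paper's proof.
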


This theorem is interesting because even though it is well known
that ER graphs are not good models
for interaction networks, they nonetheless form an important building
block for the communities. 
We interpret this theorem as saying that the simplest possible community is just
a dense ER graph. Building on this simple intuition, we think of an interaction
network as consisting of a large collection of dense ER graphs. 

This leads naturally to a question about
the distribution of sizes of these ER components. For that, consider
the power law degree distribution observed by Barab\'{a}si and Albert
\cite{BaAl99} and others.
They show that interaction graphs 
exhibit heavy-tailed degree distributions such as
\begin{equation}
  \label{eq:powerlaw}
  X_d \propto d^{-\gamma}
\end{equation}
where $X_d$ is the number of nodes of degree $d$ and $\gamma$ is the
power law exponent. 

Suppose we packed nodes with a heavy-tailed degree distribution
into a collection of dense ER graphs. A community of with
$s$ edges would be a dense ER graph of $\sqrt{s}$ vertices of degree $\sqrt{s}$, so
the size (in vertices) of the community is exactly $\sqrt{s}$.
Setting $d = \sqrt{s}$, the number of such communities is proportional to 
$$\frac{n/d^{\gamma}}{d} = \frac{n}{d^{\gamma+1}}.$$
This forms a scale-free distribution of communities, exactly as observed 
by many studies on community structure \cite{LeLaDa08,LaKiSa10}. \emph{Hence,
we hypothesize that real-world interaction networks consist of a scale-free
collection of dense Erd\H{o}s--R\'enyi graphs.} This is consistent with most
of the important observed properties of these networks.

Our analysis immediately leads to connections with Dunbar's celebrated result on 
``mean group sizes" of humans (reported to be around 148 with 95\% confidence limits of 100--231).
Empirically, this has been reported by a variety
of studies \cite{HeVi09,LaKiSa10,LeLaDa08, GoPeVe11}. If there exists a community of size $d$,
it must satisfy $n/d^{\gamma+1} \geq 1$. For the maximum community size $\bar{d}$, we have $\bar{d} \approx n^{1/(\gamma+1)}$.
For $n$ being a million and $\gamma = 2$, we get an estimate for a $100$, surprisingly close to Dunbar's
estimate.

As an aside, \Thm{community} also
proves that CL by itself is not a good model for interaction networks. Suppose
the entire graph $G$ (with $m$ edges) can be modeled as a CL graph. Since $G$ has a high clustering coefficient,
then $G$ itself is a module. Hence, $G$ must have $\Omega(\sqrt{m})$ vertices with degree $\Omega(\sqrt{m})$, but this violates
the tail behavior of the degree distribution.

\subsection{The BTER model} \label{sec:bter}

Based on the idea of a graph comprising ER communities, we propose the Block
Two-Level Erd\H{o}s-R\'{e}nyi model (BTER).
The advantages of the BTER model are that it has community structure in the
form of dense ER subgraphs and that it matches well with real-world graphs.
We briefly describe the model here and provide a more detailed
explanation and comparisons to real-world graphs in subsequent sections.

The first phase (or level) of BTER builds a collection
of ER blocks in such a way that the specified degree distribution is respected.
The BTER model allows one to construct a graph with \emph{any} degree distribution.
Real-world degree distributions might be idealized as power laws, but it
is by no means a completely accurate description \cite{ClShNe09,SaGaRoZh11}.
When the degree distribution is heavy tailed, then the BTER graph  
naturally has scale-free ER subgraphs.
The internal connectivity of the ER graphs is specified
by the user and can be tuned to match observed data.

The second phase of BTER interconnects the blocks. We assume that each node
has some \emph{excess degree} after the first phase. For example, if vertex $i$ should
have $d_i$ incident edges (according to the input degree distribution), and it has
$d'_i$ edges from its ER block, then the excess degree is $d_i - d'_i$.
We use a CL model (which can be considered as a weighted form of ER)
over the excess degrees to form the edges that connect communities.

\subsection{Previous models} \label{sec:prev}

There are many existing models for social networks and other real-world graphs. We give a short description of some important models; for more details, we recommend the survey of Chakrabarti and Faloutsos \cite{ChFa06}.
Classic examples include
preferential attachment \cite{BaAl99}, small-world models \cite{WaSt98}, copying models
\cite{KuRa+00}, and forest fire \cite{LeKlFa07}. Although these models
may produce heavy-tailed degree distributions, their clustering coefficients of the former
three models are often low \cite{SaCaWiZa10}. Even for models that give high clustering
coefficients, it is difficult to predict their community structure in
advance.  
Because of their unpredictable behavior, it is not possible to match real data with these graphs.
This makes it difficult to validate against real-world interaction networks. 
Moreover, none of these models explain community structure, one of the most
striking features of interaction graphs.

A widely used model
is the Stochastic Kronecker Graph model (known as R-MAT in an early incarnation)
\cite{ChZhFa04,LeChKlFa10}. Notably, it has been selected as the generator for the 
Graph 500 Supercomputer Benchmark \cite{Graph500}. Though it has some desirable
properties \cite{LeChKlFa10}, it can only generate lognormal tails (after suitable
addition of random noise \cite{SePiKo11}) and does not produce high
clustering coefficients~\cite{SaCaWiZa10,PiSeKo11}. 
Multifractal networks are closely related to the SKG model \cite{PaLoVi10}. The random dot product
model \cite{YoSc07, YoSc08} can be made scalable but has never been compared
to real social networks. There have been successful dendogram
based structures that perform community detection and link prediction in real graphs \cite{ClNeMo04,ClMoNe08}. The recent hyperbolic graph model \cite{BoPaKr10, KrPaKi+10} is based on hyperbolic geometry
and has been used to performing Internet routing.

The stochastic block model \cite{BiCh09} has been used to generate
better algorithms for community detection. A degree corrected version \cite{KaNe11} has
been defined to deal with imprecisions in this model. A key
feature of these models is that they break the graph into a \emph{constant} number of relatively large blocks,
and our theory shows that this model does not give a satisfactory explanation
of the clustering coefficients of low degrees (which constitute a majority of the
graph). The LFR community detection benchmark \cite{LaFoRa08} is also somewhat
connected to this model, since it defines a set of communities and has
probabilities of edges within and between these communities.
We stress that these models do not attempt to match real graphs, nor do
they explain the scale-free nature of communities \cite{LeLaDa08,LaKiSa10}.
Our hypothesis and model are very different from these results, because
we use a mathematical formalization to prove the existence of a scale-free
dense ER collection, and the BTER model follows this theory.
Nonetheless, our model can be seen as an extension of these block models, where
the number and sizes of blocks form a scale-free behavior. 
Implicitly, our model can be seen to use a labeling scheme for vertices
that depends on the degrees, and connecting vertices with probabilities
depending on the labels (thereby related to the degree corrected framework of \cite{KaNe11}).

 \begin{figure*}[tb]
   \centering
     \subfloat[Preprocessing: Distribution of nodes into communities]{\label{fig:bin}
   \blackfbox{\includegraphics[width=.2\textwidth,trim=0 0 0 0]{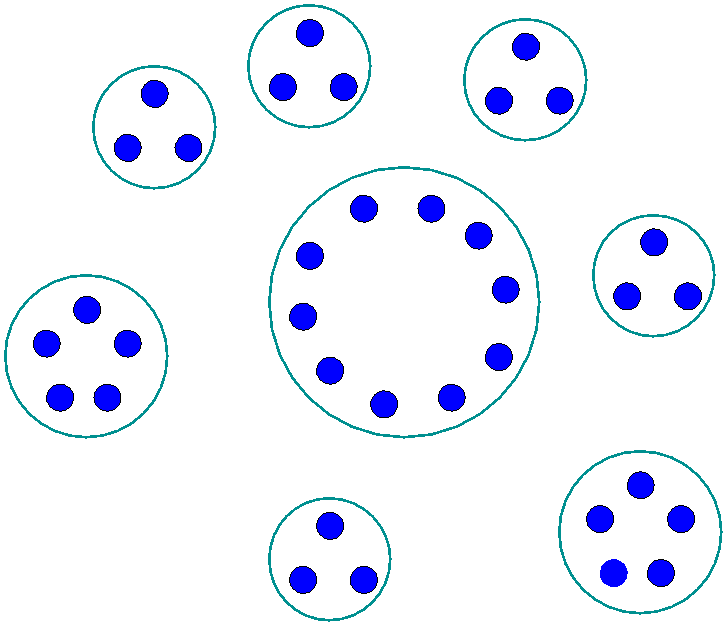}}
   }~~~~~
   \subfloat[Phase 1: Local links within each community]{\label{fig:phase1}
   \blackfbox{\includegraphics[width=.2\textwidth,trim=0 0 0 0]{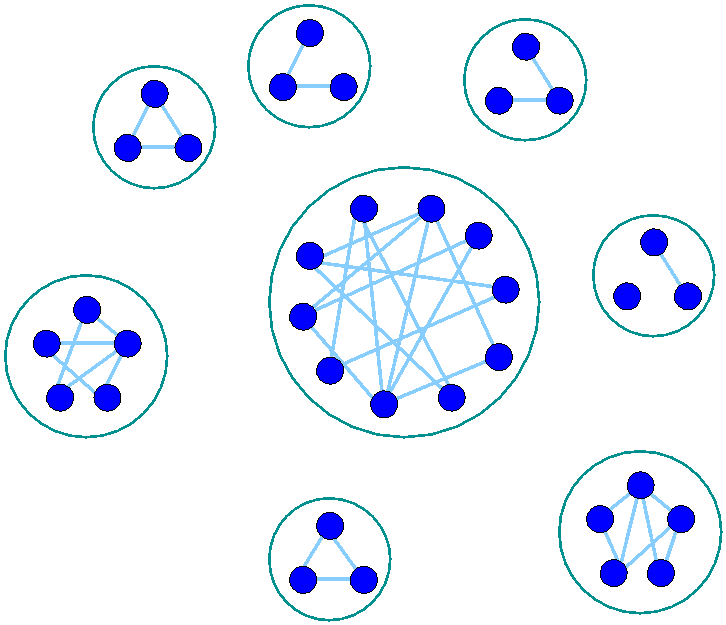}}
   }~~~~~
   \subfloat[Phase 2: Global links across communities]{\label{fig:phase2}
   \blackfbox{\includegraphics[width=.2\textwidth,trim=0 0 0 0]{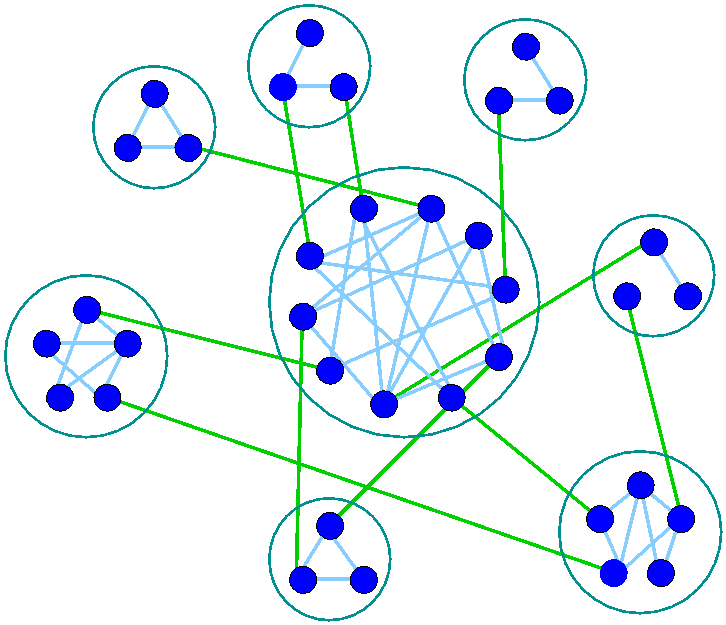}}
   }
   \caption{BTER Model Construction. In the preprocessing phase, the
     nodes are divided into communities. In Phase~1, within-community
     links are generated using the ER model. In Phase~2,
     across-community links are generated using the CL model on the
     \emph{excess} degrees.}
   \label{fig:proc}
 \end{figure*}
\section{Mathematical details} 
\label{sec:theory}

We provide a  sketch of the proof  for \Thm{community};
a complete proof is provided in the supplement.
Our analysis is fundamentally asymptotic, so 
for ease of notation we use the $O(\cdot)$, $\Omega(\cdot)$, and
$\Theta(\cdot)$ to suppress constant factors. 
The notation $A \ll B$ indicates that there exists some absolute
constant $c$ such that $A \leq cB$. 
We let $S$ denote the community of interest and assume that the 
internal degree distribution of the
community $S$ is $\hd_1, \hd_2, \ldots, \hd_r$. We denote 
the number of edges in $S$ by $s = \frac{1}{2}\sum_{i=1}^r \hd_i$.

Based on the given distribution, let $T$ denote the expected number of triangles in $S$.
Since this is a community, we demand that $T$ be at least $\kappa/3$ times the expected
number of wedges, for some constant $\kappa$. This means that 
\begin{equation} \label{eqn:T} 
  T \geq (\kappa/3) \sum_i {\hd_i \choose 2},
\end{equation}
where, for convenience, we define ${d\choose 2} = 0$ when $d=1$.
Let $j$ be the first index such that $\hd_j > 1$. We assume that $\sum_{i > j} \hd^2_j = \Omega(\sum_{i \leq r} \hd^2_i)$,
(which effectively means there are more wedges than degree $1$ vertices).
We can bound ${\hd_i \choose 2} \geq \hd^2_i/4$, and so $T = \Omega(\sum_i \hd^2_i)$.
A key fact we use is the \emph{Kruskal-Katona theorem} \cite{Kru63, Kat68, Fra84}
which states that if a graph has $T$ triangles and $s$ edges, then
$T \leq s^{3/2}$. 
Combining, we have
\begin{equation} \label{eq:kkt}
\sum_{i} \hd^2_i \ll s^{3/2}.
\end{equation}
Now, let us count the expected number
of triangles based on the CL distribution.
For any triple $(i,j,k)$, let $X_{ijk}$ be the indicator random variable
for $(i,j,k)$ being a triangle. This occurs when all the edges $(i,j)$, $(j,k)$, and $(k,i)$
are present, and by independence, this probability is $p_{ij}p_{jk}p_{ki}$,
where $p_{ij} = \hd_i\hd_j/2m$.
The expected number of triangles $T$ can be expressed as $\EX[\sum_{i<j<k} X_{ijk}]$,
which (by linearity of expectation) is $\sum_{i<j<k} \EX[X_{ijk}]$.
Therefore,
\begin{equation} 
T = \sum_{i < j < k} \frac{\hd_i\hd_j}{2s} \cdot \frac{\hd_j\hd_k}{2s} \cdot \frac{\hd_i\hd_k}{2s} 
\leq \frac{\left(\sum_i \hd^2_i\right)^3}{8s^3} \label{eq:tr}
\end{equation}
We argued earlier that $T = \Omega(\sum_i \hd^2_i)$. We can put this bound
in \Eqn{tr} and rearrange to get
\begin{equation} \label{eq:comm}
s^{3/2}  \ll \sum_{i} \hd^2_i
\end{equation}
This is the exact reverse of \Eqn{kkt}! This means that these
quantities are the \emph{same} up to
constant factors. When can this be satisfied? If the community consists of $\sqrt{s}$ vertices all with degree $\sqrt{s}$,
then $\sum_i \hd^2_i = \sum_i s = s^{3/2}$, and the conditions are exactly satisfied. 
Intuitively, to
satisfy both \Eqn{kkt} and \Eqn{comm}, there have to be $\Theta(\sqrt{s})$ vertices of degree $\Theta(\sqrt{s})$.
These vertices form a dense ER graph within the community proving that each community involves
a constant fraction of the edges in an ER graph. 

\section{The BTER Model in Detail} 
\label{sec:model}

The BTER model comprises an interconnected scale-free collection of
communities. 
Intuitively,
short-range connections (Phase 1) tend to be dense and lead to large clustering
coefficients. Long-range connections (Phase 2) are sparse and lead to
heavy-tailed degree distributions. 
We describe the steps in detail below.

\paragraph{Preprocessing}

In the preprocessing step, each node of degree 2 or higher is assigned to a
community. We assume the desired degree distribution $\{d_i\}$ is
given where $d_i$ denotes the desired degree of node $i$.
Roughly speaking, $d$ vertices of degree close to $d$ are assigned to a community
(though in reality, it is somewhat involved than that because of high
degree vertices.) The vertices
are all partitioned into these communities, which has a scale-free
behavior.
Since the degree distribution is an
input to the model, this step is relatively straightforward and
results in a structure as shown in \Fig{bin}. We let $\mathcal{G}_k$
denote the $k$th community and $k_i$ denote the community assignment
for node $i$. 

\GraphFig{ca-AstroPh}{a co-authorship network from astrophysics
  papers}{Observe the close match of the clustering coefficients of
  the real data and BTER, in contrast to CL. Additionally, the
  eigenvalues of the BTER adjacency matrix are close to those of
  the real data.}  
\GraphFig{soc-Epinions1}{a social network from the Epinions
  website}{In this case, the clustering coefficients are much smaller
  overall, but the BTER model is still a closer match to the real data
  than CL in terms of both the clustering coefficient and the
  eigenvalues of the adjacency matrix.}

\paragraph{Phase 1}
The local community structure is modeled as an ER graph on each
community. This is illustrated in \Fig{phase1}.
The connectivity of each community is a parameter of the
model. By observing the clustering coefficient plots for real graphs,
we can see that low degree vertices have a much higher clustering coefficient
than higher degree ones. This suggests that small communities are much more tightly
connected than larger ones, and so we adjust the connectivity accordingly.
Any formula may be used; we have found empirically that the following  works
well in practice. We let the edge probability for community
$k$ be defined as
\begin{equation}
  \label{eq:rho}
  \rho_{k} = \rho 
  \left[ 
    1 - \eta
    \left( 
      \frac{\log(\bar d_k +1)}{\log(d_{\max}+1)}
    \right)^2 
  \right],
\end{equation}
where $\bar d_k = \min \Set{d_i|i\in \mathcal{G}_k}$,
$d_{\max}$ is the maximum degree of any node in the entire graph, and
and $\rho$ and $\eta$ are parameters that can be selected for the
best fit to a particular graph. (These were selected by manual
experimentation for our results, but more elaborate procedures could
certainly be developed.) 

\paragraph{Phase 2}
The global structure is determined by interconnecting the
communities. We apply a CL model to the \emph{excess} degree, $e_i$, of each
node, which is computed as follows:
\begin{equation}
  \label{eq:e}
  e_i =
  \begin{cases}
    1, & \text{if } d_i = 1, \\
    d_i - \rho_{k_i} (|\mathcal{G}_{k_i}|-1), & \text{otherwise},
  \end{cases}
\end{equation}
where $|\mathcal{G}_{k}|$ is the size of community $k$.  Given the
$e_i$'s for all nodes, edges are generated by choosing two endpoints
at random. Specifically, the probability of selecting node $i$ is
$e_i/\sum_j e_j$. It is possible to produce duplicate links or
self-links, but these are discarded.  Phase 2 is illustrated in
\Fig{phase2}.

\paragraph{Reference implementation}
A MATLAB reference implementation of BTER 
is included in the supplementary material,  including scripts to
reproduce the findings in this paper.
In this implementation, we have taken some care to reduce the
variance in the CL model with respect to degree-one nodes.
We also generate extra edges in Phase 2 to account for expected
repeats and self-loops that are removed.
These details are described in detail in the supplementary materials.

\section{Results}
\label{sec:results}

We consider comparisons of the BTER model with four real-world data
sets from the SNAP collection~\cite{SNAP}. All the graphs are treated
as undirected.  Properties of these data sets are shown in \Tab{obs}.
We compare BTER with the real data as well as the corresponding CL
model. 
\Fig{ca-AstroPh} shows results on a collaboration network on
124 months of data from the astrophysics (ASTRO-PH) section of the
arXiv preprint server. Here, the edge probabilities in the communities
are given by \Eqn{rho} with $\rho=0.95$ and $\eta=0.05$.
\Fig{soc-Epinions1} shows results on a who-trusts-whom online social
(review) network from the Epinions website. 
Here, the edge probabilities in the communities are
given by \Eqn{rho} with $\rho=0.70$ and $\eta=1.25$.  
Comparisons on two additional datasets listed in the table are
provided in the supplement. 

In the leftmost plots of \Fig{ca-AstroPh} and \Fig{soc-Epinions1}, we
see the comparison of the degree distributions. The degree
distribution for ca-AstroPh has a slight ``kink'' mid-way and does not
conform to any standard degree distribution such as lognormal or power
law. Nonetheless, both BTER and CL are able to match it. The degree
distribution for soc-Epinions is fairly close to a powerlaw, and 
matched well by both BTER and CL. In fact, these models can match any
degree distribution.

The difference between BTER and CL is highlighted  when we instead
consider the clustering coefficient, shown in the center plots of
\Fig{ca-AstroPh} and \Fig{soc-Epinions1}. 
As noted previously, CL cannot have a high clustering coefficient and a
heavy tail, and this is evident in these examples. BTER, on the other
hand, has a close match with the observed clustering
coefficients. The dense ER graphs ensure that all nodes have high
clustering coefficient. 

The importance of matching the clustering coefficients becomes
apparent when considering other features of the graph such as the
eigenvalues of the adjacency matrix, as shown in the rightmost plots
of \Fig{ca-AstroPh} and \Fig{soc-Epinions1}. For ca-AstroPh, the BTER
eigenvalues are a much closer match than the CL eigenvalues because
the community behavior is significant ($C=0.32$). For soc-Epinions1, the
difference between the models in terms of the eigenvalues is less
dramatic because the community behavior is much less evident
($C=0.07$); nonetheless, BTER is still a closer match.

\section{Discussion}
\label{sec:discussion}

We define a community to be a subgraph that is internally well-modeled by CL
(and thus has no further substructure) and highly interconnected (so
that it has many triangles). We prove that any community must contain
a dense ER subgraph.
Therefore, any graph model that captures community structure must contain
dense substructures in the form of dense ER graphs. 
This observation leads naturally to the BTER model, which explicitly
builds communities of varying sizes and simultaneously generates a
heavy tail. 

Fitting the BTER model to real-world data is straightforward.
The community sizes and composition in BTER are determined
automatically according to the degree distribution. We currently use a
simplistic procedure that assumes that all nodes
in the same community have the same expected degree. 
Undoubtedly, this is an unrealistic assumption, but the variance of the
model ensures that the degrees within a community vary considerably
and Phase 2 adds connections between nodes of widely varying degrees.
The connectivity of each ER block is a user-tunable parameter that can
be adjusted to fit observed data. We
currently prescribe a simple formula \Eqn{rho} and fit by trial and
error, but the procedure could certainly
be automated. Moreover, there is no particular requirement
that $\rho_k$ be exactly the same for all communities with the same
$\bar d_k$ (minimum degree) nor that $\rho_k$ be computed by a
deterministic formula.

Our experimental results show that BTER has properties that are
remarkably similar to real-world data sets. We contend that this makes
BTER an appropriate model to use for testing algorithms and
architectures designed for interaction graphs.
In fact, BTER is even
designed to be scalable.
In particular, in Phase 2 we could compute the exact excess degree and use a
matching procedure to complete the graph. The advantage of computing
the excess degree in expectation is that it is more easily parallelized.
In that case, the assignment to
communities, the community connectivity, and the expected excess degree can all
be computed in the preprocessing stage. Both Phase 1 \emph{and} Phase
2 edges can be efficiently generated in parallel via a randomized
procedure. Therefore, the BTER model is suitable for massive-scale
modeling, such as that needed by Graph 500 \cite{Graph500}. 
The details of this implementation are outside the scope of the
current discussion but will be considered in future work.

Our formalism captures the more
advanced notion of \emph{link communities} \cite{AhBaLe10}
(where edges, rather than vertices, form communities). This allows
vertices to participate in many communities. The notion of
communities uses modules over \emph{internal degrees}, so one can
easily imagine a vertex in many communities. \Thm{community} is still
true, and we still get a scale-free collection of ER graphs which
may share vertices. We believe
it is an interesting direction to extend BTER to link (and hence
overlapping) communities.

~~

\begin{table}[h]
\centering
\caption{Data sets for empirical validation}
\label{tab:obs}
\addtolength{\tabcolsep}{3mm}
\begin{tabular}{l|ccc} 
 & Vertices & Edges & $C$ \\\hline 
ca-AstroPh \cite{LeKlFa07}   & 18,772 &  396,100 & 0.32\ \\ 
soc-Epinions1 \cite{RiAgDo03} & 75,879  & 811,480 & 0.07\\
cit-HepPh \cite{GeGiKl03}     & 34,546  & 841,754& 0.15 \\
ca-CondMat \cite{LeKlFa07} &  23,133  & 186,878 & 0.26 \\
\hline
\end{tabular}
\end{table}

\section*{Acknowledgments}
This work was funded by the applied mathematics program at the United
States Department of Energy
and by an Early Career Award from the Laboratory
Directed Research \& Development (LDRD) program at Sandia National
Laboratories. Sandia National Laboratories is
a multiprogram laboratory operated by Sandia
Corporation, a wholly owned subsidiary of Lockheed Martin Corporation,
for the United States Department of Energy's National Nuclear Security
Administration under contract DE-AC04-94AL85000. 

%All data reported in this paper are provided in the Supporting Material.

\clearpage

\appendix
\section{Supporting Material}

\subsection{Theoretical details}

The aim of this section is to prove \Thm{community}, which we restate (in slightly different wording)
for convenience. We set ${d\choose 2} = 0$, when $d = 1$. We will use small Greek
letters for constants less than $1$, and small Roman letters for constants who values
may exceed $1$. All constants considered are positive. We will make no attempt
to optimize various constant factors in the proof. The proof is asymptotic
in $s$, the number of edges of our community. That means that the proof
holds for any sufficiently large~$s$.

\begin{theorem} \label{thm:comm-real} Consider a CL graph with 
degree sequence $\hd_1 \leq \hd_2 \leq \cdots \leq \hd_r$
and set $s = \sum_i \hd_i/2$. The quantities $c > 0$ and $\kappa \in (0,1)$ are constants (independent
of $s$).

Let $j$ be the smallest index such that $\hd_j > 1$, and assume that 
$\sum_{i > j} \hd^2_i \geq cj$. Suppose the expected number of triangles
generated with this degree sequence is at least $(\kappa/3)\sum_i {\hd_i \choose 2}$. 
Then (for sufficiently large $s$),
there exists a set of indices $S \subseteq \{1,\ldots,r\}$, such that $|S| = \Omega(\sqrt{s})$
and $\forall k \in S, \hd_k = \Omega(\sqrt{s})$.

(The constants hidden in the $\Omega(\cdot)$ notation only hide a dependence on $c$ and $\kappa$.)
\end{theorem}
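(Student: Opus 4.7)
My plan is to follow the outline sketched in the main text, which needs one piece of extra work at the end. The sketch implicitly claims that $\sum_i \hd_i = 2s$ together with $M := \sum_i \hd_i^2 = \Theta(s^{3/2})$ already forces $\Omega(\sqrt{s})$ vertices of degree $\Omega(\sqrt{s})$, but this is false in general: a single vertex of degree $s^{3/4}$ amidst many degree-one vertices already gives $M = \Theta(s^{3/2})$. So I have to use the \emph{actual} CL triangle count, not merely its loose cubic upper bound, to pin down the structure.

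First I would convert the hypothesis $T \geq (\kappa/3)\sum_i \binom{\hd_i}{2}$ into a clean inequality in the single parameter $M$. Since $\binom{d}{2} \geq d^2/4$ for $d \geq 2$, the right-hand side is at least $(\kappa/12)\sum_{i \geq j} \hd_i^2$, and the hypothesis $\sum_{i > j}\hd_i^2 \geq cj$ controls the $j-1$ degree-one contributions via $M = (j-1) + \sum_{i \geq j}\hd_i^2 \leq \frac{c+1}{c}\sum_{i \geq j}\hd_i^2$. Combining, $T \geq \kappa_1 M$ with $\kappa_1 = \kappa c/(12(c+1))$. Kruskal--Katona then gives $T = O(s^{3/2})$, hence $M = O(s^{3/2})$; and the standard bound $\sum_{u<v<w} a_u a_v a_w \leq (\sum a_u)^3/6$ with $a_u = \hd_u^2$ yields the CL upper bound $T \leq M^3/(48 s^3)$, which combined with $T \geq \kappa_1 M$ gives $M \geq c_1 s^{3/2}$ for $c_1 = \sqrt{48 \kappa_1}$. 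Thus $M = \Theta(s^{3/2})$.

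For the structural step, fix a threshold $\alpha > 0$ and set $H = \{i : \hd_i \geq \alpha\sqrt{s}\}$, with $L$ its complement. I would bound separately the three CL-triangle contributions that touch $L$. Using $p_{uv} \leq \alpha^2/2$ whenever $u,v \in L$, and $p_{uv} \leq \hd_u\hd_v/(2s)$ in general, together with $M_L := \sum_{u \in L}\hd_u^2 \leq \alpha\sqrt{s}\cdot 2s = 2\alpha s^{3/2}$ and $D_H := \sum_{u \in H}\hd_u \leq 2s$, direct calculation yields $T_{LLL} \leq M_L^3/(48 s^3) = O(\alpha^3) s^{3/2}$, $T_{LLH} \leq (\alpha^2/4)M = O(\alpha^2) s^{3/2}$, and $T_{LHH} \leq M_L D_H^2/(8 s^2) = O(\alpha) s^{3/2}$. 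Choosing $\alpha$ small enough that the sum of these three errors is at most $(\kappa_1 c_1/2) s^{3/2}$ gives $T_{HHH} \geq (\kappa_1 c_1/2) s^{3/2}$. But every edge probability is at most $1$, so $T_{HHH} \leq \binom{|H|}{3} \leq |H|^3/6$, forcing $|H| \geq \Omega(\sqrt{s})$. Every $u \in H$ satisfies $\hd_u \geq \alpha\sqrt{s} = \Omega(\sqrt{s})$, and $S := H$ is the desired set.

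The main obstacle is the $T_{LHH}$ estimate: unlike $T_{LLL}$ and $T_{LLH}$, the low-degree threshold on the single $L$-vertex is not by itself enough, because the two $H$-endpoints are uncontrolled. Writing the summand as $p_{uv} p_{vw} p_{uw} \leq (\hd_u \hd_v/(2s))(\hd_v \hd_w/(2s)) \cdot 1$ (for $v \in L$, $u,w \in H$) and using $\sum_{u < w \in H} \hd_u \hd_w \leq D_H^2/2 \leq 2 s^2$ makes the final bound linear in $\alpha$, which is precisely what lets us absorb the error into $\kappa_1 M$. Once this linear-in-$\alpha$ bound is in place, the remaining arithmetic is routine.
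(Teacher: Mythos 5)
Your argument is correct, and your diagnosis of the gap in the main-text sketch is exactly right: $\sum_i \hd_i^2 = \Theta(s^{3/2})$ together with $\sum_i \hd_i = 2s$ does not by itself force $\Omega(\sqrt{s})$ vertices of degree $\Omega(\sqrt{s})$ (your single-vertex-of-degree-$s^{3/4}$ example shows this), and the paper's own appendix proof has to do extra work at exactly this point. Where you diverge is in how that work is done. The paper uses \emph{two} thresholds: truncating the edge probability at $1$ only for pairs of ``super-heavy'' vertices of degree $> b\sqrt{s}$, it first shows that vertices of degree at most $b\sqrt{s}$ must carry $\Omega(s^{3/2})$ of the second-moment mass, and then a purely degree-sequence counting argument places $\Omega(s)$ total degree in the band $[\gamma\sqrt{s},\, b\sqrt{s}]$, forcing $\Omega(\sqrt{s})$ vertices there. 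You instead use a single threshold $\alpha\sqrt{s}$, decompose the expected triangle count by how many corners fall in the low set $L$, show every contribution touching $L$ is $O(\alpha)\,s^{3/2}$, and conclude that the heavy set $H$ alone supports $\Omega(s^{3/2})$ expected triangles, whence $T_{HHH} \leq |H|^3/6$ gives $|H| = \Omega(\sqrt{s})$. I checked your three error bounds ($T_{LLL} \leq M_L^3/(48s^3)$, $T_{LLH} \leq (\alpha^2/4)M$, $T_{LHH} \leq M_L D_H^2/(8s^2)$) and they are valid; your route is arguably cleaner, since the final step needs nothing about the degree sequence beyond membership in $H$, whereas the paper must separately certify that its band vertices have degree both $\Omega(\sqrt{s})$ and $O(\sqrt{s})$.

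One step you state too casually: ``Kruskal--Katona then gives $T = O(s^{3/2})$.'' Kruskal--Katona is a statement about a fixed graph with a fixed edge count, but $T$ here is the expected triangle count of a random CL graph whose edge count $E$ is itself random; you need $T \leq \EX[E^{3/2}]$, and $\EX[E^{3/2}]$ is not controlled by $(\EX[E])^{3/2}$ (Jensen goes the wrong way). The paper patches this with a Chernoff bound showing $\EX[E^{3/2}] \leq 2s^{3/2}$. You should either import that claim or observe that in your argument the upper bound $M = O(s^{3/2})$ is used only to turn $(\alpha^2/4)M$ into $O(\alpha^2)s^{3/2}$; that term can instead be absorbed directly against $T \geq \kappa_1 M$ by taking $\alpha^2/4 \leq \kappa_1/2$ and then invoking $M \geq c_1 s^{3/2}$, which removes Kruskal--Katona from your proof entirely.
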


The proof of this theorem requires some extremal combinatorics and probability theory. 
We will first state some of these building blocks before describing the main proof.
Henceforth, the assumptions stated in the theorem hold.
An important tool is the Kruskal-Katona theorem that gives an upper bound on the
number of triangles in a graph with a fixed number of edges.

\begin{theorem}[Kruskal-Katona \cite{Kru63, Kat68, Fra84}] \label{thm:kk} If a graph has $t$ triangles and $m$ edges, then $t \leq m^{3/2}$.
\end{theorem}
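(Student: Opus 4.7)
The plan is to derive the inequality $t \leq m^{3/2}$ as a direct consequence of the classical Kruskal--Katona theorem on shadows of uniform set systems. The key move is to regard the triangles of $G$ as the hyperedges of a 3-uniform hypergraph $H$ on vertex set $V(G)$, so that $|E(H)| = t$. Its lower shadow $\partial H$---the family of pairs $\{u,v\}$ contained in some hyperedge of $H$---is a subset of $E(G)$, since any pair lying in a triangle is itself an edge of $G$. Consequently $|\partial H| \leq m$, and the entire problem is reduced to lower-bounding a shadow.

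Next I would invoke the cascade form of Kruskal--Katona. Write $t = \binom{x}{3}$ for the unique real $x \geq 3$ satisfying this (existence and uniqueness come from the monotonicity of $\binom{x}{3}$ for $x \geq 2$, and we may assume $t \geq 1$ since otherwise the bound is trivial). The theorem guarantees $|\partial H| \geq \binom{x}{2}$, which combined with the previous paragraph gives $\binom{x}{2} \leq m$, and therefore $x \leq 1 + \sqrt{2m}$. Substituting into $t = \binom{x}{3} \leq x^3/6$ yields $t \leq (1 + \sqrt{2m})^3 / 6$, whose dominant term is $(\sqrt{2}/3)\, m^{3/2}$, strictly less than $m^{3/2}$. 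For all sufficiently large $m$ this establishes the claim, and since the surrounding argument in \Thm{comm-real} is asymptotic in $s$ the finite residue is harmless; the handful of small cases can also be checked directly against the trivial bound $t \leq \binom{m}{3}^{\text{-style}}$ arguments on $K_k$.

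The only genuine obstacle is the appeal to Kruskal--Katona in cascade form, which I would simply cite as a black box (the three references \cite{Kru63,Kat68,Fra84} already appearing in the paper provide both the original and the short proofs). Once that is in place the derivation is bookkeeping on binomial coefficients. If a fully self-contained argument were desired, an alternative route would start from the identity $3t = \sum_{uv \in E(G)} |N(u) \cap N(v)|$, apply $|N(u) \cap N(v)| \leq \sqrt{d_u d_v}$ (AM--GM on the common neighborhood), and then apply Cauchy--Schwarz together with $\sum_v d_v = 2m$; this recovers $t = O(m^{3/2})$ but with a worse constant and without exposing the extremal origin of the bound, so I would prefer the Kruskal--Katona invocation.
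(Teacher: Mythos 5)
The paper never proves this statement at all: it is quoted as the Kruskal--Katona theorem and delegated entirely to the citations \cite{Kru63, Kat68, Fra84}, both where it is stated in the main text and where it is restated as \Thm{kk} in the supplement. So your proposal is not matching a proof in the paper --- it is supplying the derivation the paper leaves implicit, namely how the clean graph-theoretic form $t \leq m^{3/2}$ follows from the classical set-system (shadow) form of Kruskal--Katona. That derivation is correct: the triangles form a $3$-uniform hypergraph whose shadow sits inside $E(G)$, and the Lov\'asz/cascade form gives $\binom{x}{2} \leq m$ when $t = \binom{x}{3}$. Two refinements are worth making. First, your bookkeeping ($x \leq 1+\sqrt{2m}$, then $t \leq x^3/6$) is lossier than necessary and forces the ``sufficiently large $m$'' caveat; this is not entirely cosmetic, because the paper applies the inequality pointwise to the \emph{random} edge count $E$ in the proof of \Clm{dev} (via $t \leq E^{3/2}$), where the realization of $E$ need not be large. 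The caveat vanishes if you instead check the single inequality $\binom{x}{3} \leq \binom{x}{2}^{3/2}$ for all real $x \geq 3$ (square both sides; it reduces to $2(x-2)^2 \leq 9x(x-1)$), which immediately yields $t \leq \binom{x}{2}^{3/2} \leq m^{3/2}$ for every $m$, no exceptional cases. Second, the ``elementary alternative'' you sketch in passing does not close as stated: after $3t \leq \sum_{uv \in E}\sqrt{d_u d_v}$, following AM--GM to $\frac{1}{2}\sum_v d_v^2$ can lose too much (a star has $\sum_v d_v^2 \approx m^2$), and Cauchy--Schwarz with $\sum_v d_v = 2m$ alone does not recover $O(m^{3/2})$ without an additional walk-counting or spectral bound such as $\sum_{uv\in E} d_u d_v \leq 2m^2$. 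The standard elementary argument instead bounds the number of triangles at each vertex by $\min\bigl(\tbinom{d_v}{2}, m\bigr) \leq d_v\sqrt{m/2}$ and sums over $v$, giving $t \leq (\sqrt{2}/3)\,m^{3/2}$. None of this affects your main route, which is sound and, with the $\binom{x}{3} \leq \binom{x}{2}^{3/2}$ fix, gives exactly the statement the paper needs.
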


Since we are dealing with a graph distribution, we need some bounds on the expected
number of edges.

\begin{claim} \label{clm:dev} Let $E$ denote be the random variable denoting the
number of edges in the CL graph defined by $\{\hd_i\}$. Then $\EX[E^{3/2}] \leq 2s^{3/2}$.
\end{claim}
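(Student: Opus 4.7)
The plan is to exploit the fact that $E$ is a sum of independent Bernoulli indicators in the CL model, so its first two moments are easy to control, and then use Jensen's inequality to pass from a bound on $\mathbb{E}[E^2]$ to a bound on $\mathbb{E}[E^{3/2}]$.

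First I would write $E = \sum_{i<j} X_{ij}$, where $X_{ij}$ is a Bernoulli with mean $p_{ij} = \hat d_i \hat d_j/(2s)$, independent across pairs. The expectation computation is immediate: using $\sum_i \hat d_i = 2s$,
\begin{equation*}
\mathbb{E}[E] = \frac{1}{4s}\Big[\big(\textstyle\sum_i \hat d_i\big)^2 - \sum_i \hat d_i^2\Big] \leq \frac{(2s)^2}{4s} = s.
\end{equation*}
Because the $X_{ij}$ are independent and $\{0,1\}$-valued, the variance is bounded by the mean:
\begin{equation*}
\mathrm{Var}(E) = \sum_{i<j} p_{ij}(1-p_{ij}) \leq \mathbb{E}[E] \leq s.
\end{equation*}
Combining, $\mathbb{E}[E^2] = \mathbb{E}[E]^2 + \mathrm{Var}(E) \leq s^2 + s \leq 2s^2$ (for $s$ sufficiently large, which is the asymptotic regime of the theorem).

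The last step is the one that actually moves from the second moment to the $3/2$-moment. The function $y \mapsto y^{3/4}$ is concave on $[0,\infty)$, so Jensen's inequality gives
\begin{equation*}
\mathbb{E}[E^{3/2}] = \mathbb{E}\big[(E^2)^{3/4}\big] \leq \big(\mathbb{E}[E^2]\big)^{3/4} \leq (2s^2)^{3/4} = 2^{3/4}\,s^{3/2} < 2\,s^{3/2},
\end{equation*}
which is the claim. There is no serious obstacle here; the only thing to be slightly careful about is to use the right concavity (the exponent $3/4 < 1$ on $E^2$) rather than trying to apply Jensen directly to $x^{3/2}$, which is convex and would give the wrong-direction inequality. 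If one wanted a sharper constant one could instead invoke a Chernoff/Bernstein tail bound to show $E$ is concentrated near $s$ and integrate the tail, but the Jensen route is enough for the factor of $2$ demanded by the statement.
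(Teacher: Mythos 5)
Your proof is correct, but it takes a genuinely different route from the paper's. The paper controls $\EX[E^{3/2}]$ by concentration: it applies a multiplicative Chernoff bound to the sum of independent edge indicators to get $\Pr[E \geq 2s] \leq \exp(-s/3)$, then splits the expectation over this event and its complement, using the trivial bound $E \leq n^2 \leq s^2$ on the exponentially unlikely tail. You instead compute the first two moments exactly ($\EX[E] \leq s$, $\mathrm{Var}(E) \leq \EX[E]$ by independence of the Bernoulli indicators), conclude $\EX[E^2] \leq 2s^2$, and pass to the $3/2$-moment via Jensen applied to the concave map $y \mapsto y^{3/4}$ — equivalently, the monotonicity of $L^p$ norms, $\|E\|_{3/2} \leq \|E\|_2$. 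Your argument is more elementary (no exponential tail bound needed), is correctly set up (you rightly apply concavity to $E^2$ rather than trying to use the convex function $x^{3/2}$ directly), and in fact yields the slightly sharper constant $2^{3/4}$; the only mild caveat is that both you and the paper implicitly treat $p_{ij} = \hd_i\hd_j/2s$ without the $\min(1,\cdot)$ truncation, which is harmless here since truncation only decreases $\EX[E]$ and the variance bound $\mathrm{Var}(X_{ij}) \leq p_{ij}$ survives. The paper's Chernoff route would generalize more readily if one needed high-probability statements about $E$ itself rather than just a moment bound, but for this claim your second-moment argument is entirely sufficient and arguably cleaner.
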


\begin{proof} Let $X_{ij}$ be the indicator random variable for the edge $(i,j)$ being
present. Then, $E = \sum_{i < j} X_{ij}$. By construction, $\EX[E] \leq s$ (we get an inequality
because of possible self-loops). This is the sum of ${n\choose 2}$ independent random variables. Applying
a multiplicative Chernoff bound (Theorems 4.1 and 4.3 of \cite{MoRa01}),
$$ \Pr[E \geq 2\EX[E]] \leq \exp(-\EX[E]/3) $$
Hence, the probability that $E \geq 2s$ is at most $\exp(-s/3)$. Let $\cE$
denote the event that $E \geq 2s$. We can trivially bound $E$ by $n^2 \leq s^2$.
Using Bayes' rule,
\begin{eqnarray*} \EX[E^{3/2}] & = & \Pr(\overline{\cE}) \EX_{\overline{\cE}}[E^{3/2}] + 
\Pr(\cE) \EX_{\cE}[E^{3/2}] \\
& \leq & (2s)^{3/2} + \exp(-s/3)(s^2)^{3/2} \leq 2s^{3/2} 
\end{eqnarray*}
\end{proof}

We now prove some claims about the expected number of triangles and the degree sequence.

\begin{claim} \label{clm:triangles} Let $T$ denote the expected number of triangles.
The constants $\beta$ and $c'$ depend only on $c$ and $\kappa$.
\begin{enumerate}
	\item $T \geq \beta \sum_i \hd^2_i$.
	\item $\sum_i \hd^2_i \leq c's^{3/2}$
\end{enumerate}
\end{claim}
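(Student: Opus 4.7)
The plan is to prove the two parts in sequence, since Part 2 is essentially Part 1 combined with Kruskal--Katona.

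For Part 1, I would start from the community hypothesis $T \geq (\kappa/3)\sum_i \binom{\hd_i}{2}$ and turn $\binom{\hd_i}{2}$ into $\hd_i^2$ up to constants. The elementary inequality $\binom{d}{2} = d(d-1)/2 \geq d^2/4$ holds for $d\geq 2$, so $\sum_i \binom{\hd_i}{2} \geq \tfrac{1}{4}\sum_{i\geq j}\hd_i^2$, where $j$ is the smallest index with $\hd_j>1$. The only thing missing is that $\sum_i \hd_i^2$ also contains the degree-one contribution $\sum_{i<j}\hd_i^2 = j-1$. Here is precisely where the hypothesis $\sum_{i>j}\hd_i^2 \geq cj$ enters: it forces $j-1 \leq (1/c)\sum_{i\geq j}\hd_i^2$, and rearranging gives $\sum_{i\geq j}\hd_i^2 \geq \tfrac{c}{c+1}\sum_i \hd_i^2$. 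Chaining the three inequalities yields
\[
T \;\geq\; \frac{\kappa}{3}\cdot\frac{1}{4}\cdot\frac{c}{c+1}\sum_i \hd_i^2 \;=\; \beta \sum_i \hd_i^2,
\]
with $\beta := \kappa c / \bigl(12(c+1)\bigr)$, which depends only on $c$ and $\kappa$.

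For Part 2, I would combine Part 1 with the Kruskal--Katona bound (Theorem~\ref{thm:kk}) and the moment estimate of Claim~\ref{clm:dev}. On every realization of the CL graph, if $E$ is the (random) number of edges and $t$ is the (random) number of triangles, then $t \leq E^{3/2}$ pointwise. Taking expectations and applying Claim~\ref{clm:dev},
\[
T \;=\; \EX[t] \;\leq\; \EX[E^{3/2}] \;\leq\; 2 s^{3/2}.
\]
Combining with Part 1, $\beta \sum_i \hd_i^2 \leq T \leq 2s^{3/2}$, so $\sum_i \hd_i^2 \leq c' s^{3/2}$ with $c' := 2/\beta$, again a function only of $c$ and $\kappa$.

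The main thing to watch out for is not a mathematical obstacle but a bookkeeping one: the lower bound $T \geq (\kappa/3)\sum_i \binom{\hd_i}{2}$ is blind to degree-one vertices (for which $\binom{\hd_i}{2}=0$), whereas $\sum_i \hd_i^2$ does see them. The hypothesis $\sum_{i>j}\hd_i^2 \geq cj$ is custom-made to neutralize this discrepancy, so the only care required is to track the constants and verify that they can be expressed purely in terms of $c$ and $\kappa$ (which they can). Once Part 1 is in place, Part 2 is essentially a one-line chaining through Theorem~\ref{thm:kk} and Claim~\ref{clm:dev}, requiring no additional ideas.
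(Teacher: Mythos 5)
Your proposal is correct and follows essentially the same route as the paper's own proof: both parts use the identical chain of inequalities, namely $\binom{d}{2}\geq d^2/4$ plus the hypothesis $\sum_{i>j}\hd_i^2\geq cj$ to absorb the degree-one vertices for Part~1, and Kruskal--Katona applied pointwise followed by Claim~\ref{clm:dev} for Part~2, arriving at the same constants $\beta=(\kappa/12)\cdot c/(c+1)$ and $c'=2/\beta$. Your bookkeeping of the degree-one contribution ($j-1$ rather than the paper's slightly loose $j$) is if anything marginally more careful.
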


\begin{proof} By assumptions in \Thm{comm-real}, $T \geq (\kappa/3) \sum_i {\hd_i \choose 2}$.
For $\hd_i > 1$, ${\hd_i \choose 2} \geq \hd^2_i/4$ (for large $\hd_i$, it is actually 
much closer to $\hd^2_i/2$).
Hence, $T \geq (\kappa/12) \sum_{i > j} \hd^2_i$, where $j$ is the smallest index of a non-degree
$1$ vertex (as stated in \Thm{comm-real}). By assumption, 
$\sum_{i\leq j} \hd^2_i = j \leq (1/c)\sum_{i > j} \hd^2_i$, and
$\sum_i \hd^2_i \leq (1+1/c)\sum_{i > j} \hd^2_i$. We can
complete the proof of the first part with the following.
$$ T \geq (\kappa/12) \sum_{i > j} \hd^2_i \geq (\kappa/12)(1+1/c)^{-1} \sum_i \hd^2_i$$
Suppose we generate a random CL graph. Let $t$ be the number of triangles
and $E$ be the number of edges (both random variables). By \Thm{kk}, 
$t \leq E^{3/2}$. Taking expectations and applying \Clm{dev}, $T \leq \EX[E^{3/2}] \leq 2s^{3/2}$.
Combining with the first part of this claim, $\sum_i \hd^2_i \leq (2/\beta)s^{3/2}$.
\end{proof}

We come to the proof of the main theorem.

\begin{proof} (of \Thm{comm-real})
We choose $b$ to be a sufficiently large constant, and $\gamma$ to be sufficiently small.
Let $\ell$ be the smallest index such that $\hd_\ell > b\sqrt{s}$.
For a triple of vertices $(i,j,k)$, let $X_{ijk}$ be the indicator random variable
for $(i,j,k)$ forming a triangle. Note that $T = \EX[\sum_{i < j < k} X_{ijk}]$.
Then we have,
\begin{eqnarray} & & \EX[\sum_{i < j < k} X_{ijk}] \nonumber \\
& = & \sum_{i < j < k} \EX[X_{ijk}] \nonumber \\
& \leq & \sum_{i < j < k} \min\left(\frac{\hd_i \hd_j}{2s},1\right) \times \min\left(\frac{\hd_i \hd_k}{2s},1\right) \times \min\left(\frac{\hd_j \hd_k}{2s},1\right) \nonumber\\
& \leq & \sum_{i < j < k} \frac{\hd_i \hd_j}{2s} \times \frac{\hd_i \hd_k}{2s}
\times \min\left(\frac{\hd_j \hd_k}{2s},1\right) \nonumber\\
& \leq & \sum_i \hd^2_i \left[ \sum_{\substack{j \\ j < k \leq \ell}} \frac{\hd^2_j\hd^2_k}{8s^3} + \sum_{j,k \geq \ell} \frac{\hd_j\hd_k}{4s^2} \right] \nonumber\\
& \leq & (\sum_i \hd^2_i)\left[\frac{(\sum_j \hd^2_j)(\sum_{k \leq \ell} \hd^2_k)}{8s^3} + \frac{(\sum_{j \geq \ell} \hd_j)^2}{4s^2} \right]. \nonumber
\end{eqnarray}
By the first part of \Clm{triangles}, $T \geq \beta \sum_i \hd^2_i$. For convenience,
we will replace all the independent indices above by $i$.
\begin{equation} \label{eq:cc-bound} \beta \leq \frac{(\sum_i \hd^2_i)(\sum_{i \leq \ell} \hd^2_i)}{8s^3} + \frac{(\sum_{i \geq \ell} \hd_i)^2}{4s^2}
\end{equation}
By \Clm{triangles}, $\sum_i \hd^2_i \leq c' s^{3/2}$. Furthermore, 
$\sum_i \hd^2_i \geq b\sqrt{s} \sum_{i \geq \ell} \hd_i$.
Combining the two bounds,
$\sum_{i \geq \ell} \hd_i \leq (c'/b) s$.
Applying these bounds in \Eqn{cc-bound} and setting constant $\tau$ appropriately,
\begin{eqnarray*} 
& & \beta \leq \frac{c'\sum_{i \leq \ell} \hd^2_i}{8s^{3/2}} + (c'/2b)^2 \\
& \Longrightarrow & \sum_{i \leq \ell} \hd^2_i \geq (8/c')(\beta-(c'/2b)^2)s^{3/2}  = \tau s^{3/2}
\end{eqnarray*}
(By setting $b$ to be sufficiently large, we can ensure that $\tau$ is a positive
constant.)
Let $\ell'$ be the smallest index such that $\hd_{\ell'} \geq \gamma \sqrt{m}$
and set $s' = \sum_{\ell' \leq i \leq \ell} \hd_i$. 
\begin{eqnarray*} & & \tau s^{3/2} \leq \sum_{i \leq \ell} d^2_i \leq \gamma(s-s')\sqrt{s} + bs'\sqrt{s} \\
& \Longrightarrow & \tau s \leq s'(b-\gamma) + \gamma s \\
& \Longrightarrow & s' \geq s(\tau - \gamma)/(b-\gamma) = \Omega(s)
\end{eqnarray*}
(Again, a sufficiently small $\gamma$ ensures positivity.)
This means that the vertices with indices in $[\ell',\ell]$ are totally
incident to at least $\Omega(m)$ edges. All these vertices have degrees
that are $\Theta(\sqrt{m})$, and hence there are $\Theta(\sqrt{m})$ such
vertices.
\end{proof}

\subsection{Implementation details}

We give some specifics of the BTER implementation, to accompany the
included MATLAB implementation.

In Phase 1, the last ``community'' generally has fewer than $\bar d_k$
nodes because we have run out of nodes. We have found it convenient to
set $\rho_k=0$ for the last community since it is generally pretty
small in any case.

We split the calculation of the Phase 2 edges into three
subphases so that we can specially handle the degree-1 edges. 
The variance for degree-1 vertices in the CL model is high, so we
set aside a proportion of these vertices to be handled ``manually.''
Let $r$ denote the number of degree-1 vertices and assumed the
vertices and indexed from least degree to greatest.
By default, 75\% of the degree-1 vertices are handled ``manually'' (the
exact proportion is user-definable); let $p = \nint{0.75 r}$ denote
this quantity where $\nint{\cdot}$ denotes nearest integer.
We update $e_i$ as follows:
\begin{displaymath}
  e_i \gets
  \begin{cases}
    0, & \text{for } 1 \leq i \leq p,\\
    1.10, & \text{for } p+1 \leq i \leq r,\\
    e_i, & \text{otherwise}.
  \end{cases}
\end{displaymath}
This update removes the first $p$ nodes from the CL part and also
slightly raises the probability of an edge for the remaining $r-p$
degree-1 nodes. This modifications help to balance out the fact that
some higher degree nodes (in expectation) which actually become
degree-1 nodes in the final graph, so we need some of the degree-1
nodes (in expectation) to become higher degree in the final graph.

In Phase 2a, we set
aside $q \leq p$ ($q$ even) degree-1 vertices to be connected to other
degree-1 vertices. This value can be specified by the user or defaults
to
\begin{displaymath}
  q = 2 \; \left\lfloor \frac{p^2}{2 \sum_i d_i} \right\rceil,
\end{displaymath}
which is the expected number of degree-1-to-degree-1 edges expected in
the CL model. This can be accomplished by randomly pairing the selected
vertices. In all of our experiments, we use $q=0$.

In Phase 2b, we manually connect the remaining $(p-q)$ vertices to the
rest of the graph. For each degree-1 vertex, we select an endpoint
proportional to $e_i$.

In Phase 2c, we finally create the CL model. We modify the expected
degrees to account for the edges used in Phase 2b and to account for
duplicates. Thus, we update $e_i \gets \eta e_i$ where
\begin{displaymath}
  \eta = 1 - 2 \frac{p-q}{p-q+\sum_i e_i} + \beta,
\end{displaymath}
where $\beta$ is the proportion of duplicates. We use $\beta = 0.10$
in our experiments. The total number of edges generated in Phase 2c
(including repeats and self-edges, which are discarded) is
$\nint{\sum_i e_i/2}$.

\subsection{Additional experimental results}

We consider two additional experiments, as shown
in \Fig{cit-HepPh} and \Fig{ca-CondMat}.
The graph \Fig{cit-HepPh}  represents    
 all the citations within a dataset on the high energy physics phenomenology section of the arXiv preprint server. 
For \Fig{cit-HepPh}, we use an alternate formula for $\rho_k$ as follows:
\begin{displaymath}
  \rho_{k} = 0.7
  \left[ 
    1 - 0.6
    \left( 
      \frac{\log(\bar d_k +1)}{\log(d_{\max}+1)}
    \right)^3
  \right].
\end{displaymath}
And  \Fig{ca-CondMat} shows results on a collaboration network on
124 months of data from the condensed matter (COND-MAT) section of the
arXiv server. Here, the edge probabilities in the communities
are given by \Eqn{rho} with $\rho=0.95$ and $\eta=0.95$.

\GraphFig{cit-HepPh}{a citation network of High Energy Physics
  papers}{}

\GraphFig{ca-CondMat}{a co-authorship network of Condensed Matter
  physics}{}

The results on these two graphs are consistent with with our earlier results. The leftmost plots show that both CL and BTER can match the degree distributions of  the original graph, as expected. Again, the clustering coefficients plots in the middle highlight the strengthen of BTER, and how it differs from CL: BTER matches the clustering coefficients closely, while 
CL does not produce any significant number of triangles.  The
rightmost column shows that the eigenvalues of the adjacency matrices
of BTER are closer to those of the original graph than those produced
by CL.

\subsection{Codes and data}

The codes and data are available at \url{http://www.sandia.gov/~tgkolda/bter_supplement/}.


\begin{thebibliography}{10}

\bibitem{GiNe02}
M.~Girvan, M.~Newman, Community structure in social and biological networks,
  {\it Proceedings of the National Academy of Sciences\/} {\bf 99}, 7821
  (2002).

\bibitem{New06}
M.~E.~J. Newman, Modularity and community structure in networks, {\it
  Proceedings of the National Academy of Sciences\/} {\bf 106}, 21068 (2009).

\bibitem{WaSt98}
D.~Watts, S.~Strogatz, Collective dynamics of `small-world' networks, {\it
  Nature\/} {\bf 393}, 440 (1998).

\bibitem{LeLaDa08}
J.~Leskovec, K.~J. Lang, A.~Dasgupta, M.~W. Mahoney, Community structure in
  large networks: Natural cluster sizes and the absence of large well-defined
  clusters, {\it Internet Mathematics\/} {\bf 6}, 29 (2009).

\bibitem{LaKiSa10}
A.~Lancichinetti, M.~Kivelä, J.~Saramäki, S.~Fortunato, Characterizing the
  community structure of complex networks, {\it PLoS ONE\/} {\bf 5}, e11976
  (2010).

\bibitem{Ch92-book}
F.~R.~K. Chung, {\it Spectral Graph Theory\/} (American Mathematical Society,
  1992).

\bibitem{ChLu02}
F.~Chung, L.~Lu, The average distances in random graphs with given expected
  degrees, {\it Proceedings of the National Academy of Sciences\/} {\bf 99},
  15879 (2002).

\bibitem{ChLu02-2}
F.~Chung, L.~Lu, Connected components in random graphs with given degree
  sequences, {\it Annals of Combinatorics\/} {\bf 6}, 125 (2002).

\bibitem{AiChLu01}
W.~Aiello, F.~Chung, L.~Lu, A random graph model for power law graphs, {\it
  Experimental Mathematics\/} {\bf 10}, 53 (2001).

\bibitem{NeWaSt02}
M.~Newman, D.~Watts, S.~Strogatz, Random graph models of social networks, {\it
  Proceedings of the National Academy of Sciences\/} {\bf 99}, 2566 (2002).

\bibitem{BaWe00}
A.~Barrat, M.~Weigt, On the properties of small-world network models, {\it The
  European Physical Journal B - Condensed Matter and Complex Systems\/} {\bf
  13}, 547 (2000). 10.1007/s100510050067.

\bibitem{ErRe59}
P.~Erd\"{o}s, A.~R\'{e}nyi, On random graphs, {\it Publicationes
  Mathematicae\/} {\bf 6}, 290 (1959).

\bibitem{ErRe60}
P.~Erd\"{o}s, A.~R\'{e}nyi, On the evolution of random graphs, {\it
  Publications of the Mathematical Institute of the Hungarian Academy of
  Sciences\/} {\bf 5}, 17 (1960).

\bibitem{BaAl99}
A.-L. Barab\'{a}si, R.~Albert, Emergence of scaling in random networks, {\it
  Science\/} {\bf 286}, 509 (1999).

\bibitem{HeVi09}
A.~Hernando, D.~Villuendas, C.~Vesperinas, M.~Abad, A.~Plastino, Unravelling
  the size distribution of social groups with information theory on complex
  networks (2009). ArXiv:0905.3704v3.

\bibitem{GoPeVe11}
B.~Gonçalves, N.~Perra, A.~Vespignani, Modeling users' activity on twitter
  networks: Validation of dunbar's number, {\it PLoS ONE\/} {\bf 6}, e22656
  (2011).

\bibitem{ClShNe09}
A.~Clauset, C.~R. Shalizi, M.~E.~J. Newman, Power-law distributions in
  empirical data, {\it SIAM Review\/} {\bf 51}, 661 (2009).

\bibitem{SaGaRoZh11}
A.~Sala, H.~Zheng, B.~Y. Zhao, S.~Gaito, G.~P. Rossi, Brief announcement:
  revisiting the power-law degree distribution for social graph analysis, {\it
  PODC '10: Proceeding of the 29th ACM SIGACT-SIGOPS symposium on Principles of
  distributed computing\/} (ACM, 2010), pp. 400--401.

\bibitem{ChFa06}
D.~Chakrabarti, C.~Faloutsos, Graph mining: Laws, generators, and algorithms,
  {\it ACM Computing Surveys\/} {\bf 38} (2006).

\bibitem{KuRa+00}
R.~Kumar, {\it et~al.\/}, Stochastic models for the web graph, {\it Proceedings
  of the 41st Annual Symposium on Foundations of Computer Science\/} (2000),
  pp. 57--65.

\bibitem{LeKlFa07}
J.~Leskovec, J.~Kleinberg, C.~Faloutsos, Graph evolution: Densification and
  shrinking diameters, {\it ACM Transactions on Knowledge Discovery from
  Data\/} {\bf 1}, article no. 2 (2007).

\bibitem{SaCaWiZa10}
A.~Sala, {\it et~al.\/}, Measurement-calibrated graph models for social network
  experiments, {\it WWW '10 Proceedings of the 19th International Conference on
  the World Wide Web\/} (ACM, New York, 2010), pp. 861--870.

\bibitem{ChZhFa04}
D.~Chakrabarti, Y.~Zhan, C.~Faloutsos, {R-MAT}: A recursive model for graph
  mining, {\it SDM '04: Proceedings of the 2004 SIAM International Conference
  on Data Mining\/} (2004), pp. 442--446.

\bibitem{LeChKlFa10}
J.~Leskovec, D.~Chakrabarti, J.~Kleinberg, C.~Faloutsos, Z.~Ghahramani,
  Kronecker graphs: An approach to modeling networks, {\it Journal of Machine
  Learning Research\/} {\bf 11}, 985 (2010).

\bibitem{Graph500}
Graph 500 benchmark (2010). Available at
  \url{http://www.graph500.org/Specifications.html}.

\bibitem{SePiKo11}
C.~Seshadhri, A.~Pinar, T.~Kolda, An in-depth study of stochastic {K}ronecker
  graphs, {\it ICDM'11: Proceedings of the 2011 IEEE International Conference
  on Data Mining\/} (2011). In press.

\bibitem{PiSeKo11}
A.~Pinar, C.~Seshadhri, T.~G. Kolda, The similarity between {S}tochastic
  {K}ronecker and {C}hung-{L}u graph models (2011). ArXiv:1110.4925.

\bibitem{PaLoVi10}
G.~Palla, L.~Lov\'{a}sz, T.~Vicseka, Multifractal network generator, {\it
  Proceedings of the National Academy of Sciences\/} {\bf 107}, 7640 (2010).

\bibitem{YoSc07}
S.~J. Young, E.~R. Scheinerman, Random dot product graph models for social
  networks, {\it WAW'07: Proceedings of the 5th International Conference on
  Algorithms and Models for the Web Graph\/}, A.~Bonato, F.~Chung, eds.
  (Springer Berlin/Heidelberg, 2007), vol. 4863 of {\it Lecture Notes in
  Computer Science\/}, pp. 138--149.

\bibitem{YoSc08}
S.~J. Young, E.~R. Scheinerman, Directed random dot product graphs, {\it
  Internet Mathematics\/} {\bf 5}, 91 (2008).

\bibitem{ClNeMo04}
A.~Clauset, M.~E.~J. Newman, C.~Moore, Finding community structure in very
  large networks, {\it Physical Review E\/} {\bf 70}, 066111 (2004).

\bibitem{ClMoNe08}
A.~Clauset, C.~Moore, M.~E.~J. Newman, Hierarchical structure and the
  prediction of missing links in networks, {\it Nature\/} {\bf 453}, 98 (2008).

\bibitem{BoPaKr10}
M.~Boguna, F.~Papadopoulos, D.~Krioukov, Sustaining the internet with
  hyperbolic mapping, {\it Nature Communications\/} {\bf 1}, 62 (2010).

\bibitem{KrPaKi+10}
D.~Krioukov, F.~Papadopoulos, M.~Kitsak, A.~Vahdat, M.~Boguna, Hyperbolic
  geometry of complex networks, {\it Physical Review E\/} {\bf 82}, 036106
  (2010).

\bibitem{BiCh09}
P.~J. Bickel, A.~Chen, A nonparametric view of network models and
  {N}ewman-{G}irvan and other modularities, {\it Proceedings of the National
  Academy of Sciences\/} {\bf 103}, 8577 (2006).

\bibitem{KaNe11}
B.~Karrer, M.~E.~J. Newman, Stochastic blockmodels and community structure in
  networks, {\it Physical Review E\/} {\bf 83}, 21068 (2011).

\bibitem{LaFoRa08}
A.~Lancichinetti, S.~Fortunato, F.~Radicchi, Benchmark graphs for testing
  community detection algorithms, {\it Physical Review E\/} {\bf 78}, 046110
  (2008).

\bibitem{Kru63}
J.~B. Kruskal, The number of simplices in a complex, {\it Mathematical
  Optimization Techniques\/}, R.~Bellman, ed. (Cambridge University Press,
  London, 1963), pp. 251--278.

\bibitem{Kat68}
G.~Katona, A theorem of finite sets, {\it Theory of Graphs: Proceedings of the
  Colloquium held at Tihany, Hungary, Sept. 1966\/} (Academic Press and
  Akad\'emiai Kaid\'o, Budapest, 1968), pp. 187--207.

\bibitem{Fra84}
P.~Frankl, A new short proof for the {K}ruskal-{K}atona theorem, {\it Discrete
  Mathematics\/} {\bf 48}, 327 (1984).

\bibitem{SNAP}
{SNAP: Stanford Network Analysis Project}. Available at
  \url{http://snap.stanford.edu/}.

\bibitem{AhBaLe10}
Y.-Y. Ahn, J.~P. Bagrow, S.~Lehmann, Link communities reveal multiscale
  complexity in networks, {\it Nature\/} {\bf 466}, 761 (2010).

\bibitem{RiAgDo03}
M.~Richardson, R.~Agrawal, P.~Domingos, Trust management for the semantic web,
  {\it ISWC 2003: Proceedings of the Second Interational Semantic Web
  Conference\/} (Springer Berlin/Heidelberg, 2003), vol. 2870 of {\it Lecture
  Notes in Computer Science\/}, pp. 351--368.

\bibitem{GeGiKl03}
J.~Gehrke, P.~Ginsparg, J.~Kleinberg, Overview of the 2003 {KDD} cup, {\it ACM
  SIGKDD Explorations Newsletter\/} {\bf 5}, 149 (2003).

\bibitem{MoRa01}
R.~Motwani, P.~Raghavan, {\it Randomized Algorithms\/} (Cambridge University
  Press, 2001).

\end{thebibliography}
\end{document}